\newtheorem{theorem}{Theorem}
\newtheorem{corollary}{Corollary}
\newtheorem{remark}{Remark}
\newtheorem{proposition}{Proposition}
\newenvironment{proof}[1][Proof]{\noindent\textbf{#1.} }{\ \rule{0.5em}{0.5em}}
\newcommand{\bY}{\mathbf{Y}}
\newcommand{\bX}{\mathbf{X}}
\newcommand{\bP}{\mathbf{P}}
\newcommand{\Ber}{\mathrm{Bernoulli}}
\newcommand{\Bin}{\mathrm{Binomial}}
\newcommand{\m}{\mathcal}
\def\bpi{\bm{\pi}}
\begin{document}

\IEEEoverridecommandlockouts

\title{Novel Lower Bounds on the Entropy Rate of Binary Hidden Markov Processes}
\author{\IEEEauthorblockN{Or Ordentlich}
\IEEEauthorblockA{MIT\\
ordent@mit.edu
}

\IEEEcompsocitemizethanks{
\IEEEcompsocthanksitem
The work of O. Ordentlich was supported by the MIT - Technion Postdoctoral Fellowship.
}}

\parskip 3pt

\maketitle

\begin{abstract}
Recently, Samorodnitsky proved a strengthened version of Mrs. Gerber's Lemma, where the output entropy of a binary symmetric channel is bounded in terms of the average entropy of the input projected on a random subset of coordinates. Here, this result is applied for deriving novel lower bounds on the entropy rate of binary hidden Markov processes. For symmetric underlying Markov processes, our bound improves upon the best known bound in the very noisy regime. The nonsymmetric case is also considered, and explicit bounds are derived for Markov processes that satisfy the $(1,\infty)$-RLL constraint.
\end{abstract}

\section{Introduction}\label{sec:intro}

Let $\{X_n\}$, $n=1,2,\ldots$, be a symmetric stationary binary Markov process with transition probability $0<q<\tfrac{1}{2}$, such that $X_1\sim\Ber(\tfrac{1}{2})$ and for any $n>1$
\begin{align}
X_n=X_{n-1}\oplus W_n,\nonumber
\end{align}
where $\{W_n\}$, $n=2,3,\ldots$, is a sequence of i.i.d. $\Ber(q)$ random variables, statistically independent of $X_1$. We consider the hidden Markov process $\{Y_n\}$, $n=1,2,\ldots$, obtained at the output of a binary symmetric channel (BSC) with crossover probability $0<\alpha<\tfrac{1}{2}$, whose input is the process $\{X_n\}$. Namely,
\begin{align}
Y_n=X_n\oplus Z_n,\nonumber
\end{align}
where $\{Z_n\}$, $n=1,2,\ldots$, is a sequence of i.i.d. $\Ber(\alpha)$ random variables, statistically independent of $\{X_n\}$. The task of finding an explicit form for the entropy rate
\begin{align}
\bar{H}(Y)\triangleq\lim_{n\to\infty}\frac{H(Y_1,\ldots,Y_n)}{n}\nonumber
\end{align}
of the process $\{Y_n\}$ is a long-standing open problem, and the main contribution of this paper is in providing novel lower bounds for this quantity.

A simple lower bound on $\bar{H}(Y)$ can be obtained by invoking Mrs. Gerber's Lemma (MGL)~\cite{wz73}, which states that if $\{X_n\}$ is the input to a BSC with crossover probability $\alpha$, and $\{Y_n\}$ is the output, then
\begin{align}
H(Y_1,\ldots,Y_n)\geq n h\left(\alpha*h^{-1}\left(\frac{H(X_1,\ldots,X_n)}{n}\right)\right),\label{eq:MGL}
\end{align}
where $h(p)\triangleq -p\log(p)-(1-p)\log(1-p)$ is the binary entropy function, $h^{-1}(\cdot)$ is its inverse restricted to the interval $[0,\tfrac{1}{2}]$, and $a*b\triangleq a(1-b)+b(1-a)$. Here, as well as throughout the rest of the paper, logarithms are taken to base $2$. Since the entropy rate of the symmetric Markov process $\{X_n\}$ is $\bar{H}(X)=h(q)$, for symmetric hidden Markov processes the bound~\eqref{eq:MGL} takes the simple form
\begin{align}
\bar{H}(Y)\geq h\left(\alpha*q\right).\label{eq:mglmarkov}
\end{align}
Unfortunately, this bound is quite loose for many regimes of the process parameters $\alpha$ and $q$.

Recently, Samorodnitsky~\cite{Samorodnitsky15} proved a strengthened version of MGL, where the normalized input entropy $H(X_1,\ldots,X_n)/n$ in the right hand side of~\eqref{eq:MGL} is replaced by the average normalized entropy of the random vector $(X_1,\ldots,X_n)$ projected on a random subset of coordinates.
In this paper we apply the results of~\cite{Samorodnitsky15} to derive a novel lower bound on $\bar{H}(Y)$. Despite its simplicity, we show that this bound is stronger than the best known lower bounds for the very noisy regime ($\alpha\to \tfrac{1}{2}$), and recovers the strongest bound for the fast transitions regime ($q\to\tfrac{1}{2}$). For finite values of $(\alpha,q)$ it is numerically demonstrated that the bound is reasonably close to the true value of $\bar{H}(Y)$, which can be estimated to an arbitrary precision by various known approximation algorithms.

We also derive a lower bound on $\bar{H}(Y)$ for the case where the process $\{X_n\}$ is a nonsymmetric binary Markov process. For the special case of Markov processes that satisfy the so-called $(1,\infty)$-RLL constraint, our bound is shown to be tight in the very noisy regime.


\section{Preliminaries}\label{sec:prel}

Let $\bX=(X_1,\ldots,X_n)$ be a binary $n$-dimensional random vector, $[n]\triangleq\{1,\ldots,n\}$, and $S\subseteq[n]$ some subset of coordinates. The projection of $\bX$ onto $S$ is defined as
\begin{align}
\bX_S\triangleq \{X_i \ : i\in S\}.\nonumber
\end{align}
As before, we assume that $\bY$ is the output of a BSC with crossover probability $\alpha$, whose input is the vector $\bX$. Samorodnitsky has proved the following result.

\begin{theorem}[{\cite[Theorem 1.11]{Samorodnitsky15}}]
Let $\lambda=(1-2\alpha)^2$ and let $B$ be a random subset uniformly distributed over all subsets of $[n]$ with cardinality $\lceil \lambda n\rceil$. Then
\begin{align}
H(\bY)\geq n h\left(\alpha*h^{-1}\left(\frac{H(\bX_B|B)}{\lambda n}\right)\right)-E,\label{eq:SMGL}
\end{align}
where $E=\m{O}\left(\sqrt{\frac{\log n}{n}}\right)\cdot(n-H(\bX))$.
\label{thm:smgl}
\end{theorem}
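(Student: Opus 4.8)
The plan is to pass to the noise-operator picture and reduce the claim to a hypercontractivity-type estimate. Writing $\rho=1-2\alpha$, the BSC acts on distributions by the Bonami--Beckner operator $T_\rho$, so that $p_{\bY}=T_\rho p_{\bX}$ (viewing the laws as densities relative to the uniform measure $U$ on $\{0,1\}^n$); the parameter $\lambda=\rho^2$ is precisely the hypercontractive/noise-stability exponent of $T_\rho$, since $T_\rho$ is self-adjoint and $\|T_\rho f\|_2^2=\langle f,T_{\rho^2}f\rangle=\langle f,T_\lambda f\rangle$, which is why a random subset of density $\lambda$ — rather than the density-$\rho$ subset arising from the elementary description of the BSC as ``keep the bit with probability $\rho$, else replace it by a fresh uniform bit'' — is the natural object. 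It is convenient to record the equivalent form of the target: since $H(\bY)=n-D(p_{\bY}\|U)$ and similarly for $\bX$, the claim reads $D(T_\rho p\|U)\le n-n\,\Psi\!\big(H(\bX_B\mid B)/(\lambda n)\big)+E$, where $\Psi(t)\triangleq h\big(\alpha*h^{-1}(t)\big)$ is the Mrs.\ Gerber function, which is convex and increasing on $[0,1]$.

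Next I would isolate the slack in ordinary MGL. Its proof uses only convexity of $\Psi$ together with $H(\bY)=\sum_i H(Y_i\mid\bY_{<i})\ge\sum_i H(Y_i\mid\bX_{<i})$, the single-letter identity $H(Y_i\mid\bX_{<i})=\Expt\big[\Psi\big(H(X_i\mid\bX_{<i}=\cdot)\big)\big]$, and two applications of Jensen. All the waste is in $H(Y_i\mid\bY_{<i})\ge H(Y_i\mid\bX_{<i})$, which discards the fact that $\bY_{<i}$ is only a noisy image of $\bX_{<i}$. A first, elementary attempt to recover some of it is to split $[n]$ into a random $\lceil\lambda n\rceil$-subset $B$ and its complement, apply MGL to $\bY_B$ and conditional MGL to $\bY_{B^c}$ given $\bX_B$, and average over $B$ by Jensen; this gives $H(\bY)\ge |B|\,\Psi\!\big(H(\bX_B\mid B)/|B|\big)+(n-|B|)\,\Psi\!\big((H(\bX)-H(\bX_B\mid B))/(n-|B|)\big)$, which beats MGL but — since the second term is nonnegative while $H(\bX_B\mid B)/(\lambda n)\ge H(\bX)/n$ by Shearer's inequality — still falls short of the target. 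Closing the remaining gap is the crux. Both this one-shot split and the natural random-subset refinement of the chain rule stall (the latter because $H(Y_i\mid\bY_{<i})\ge H(Y_i\mid\bX_B)$ is false for general $B$), so I would instead argue through $L^q$ norms: establish a one-dimensional base inequality bounding $\|T_\rho f\|_q$, for $q>1$ near $1$, by an average over coordinates of single-coordinate restrictions of $f$; tensorize over the $n$ factors — this is what converts the exponent $\rho^2=\lambda$ into a subset of density $\lambda$; and let $q\downarrow 1$ so that, by a logarithmic-Sobolev-type differentiation, the $L^q$-inequality degenerates into the entropy inequality. Convexity of $\Psi$ is then invoked, exactly as in the closing steps of the MGL proof, to consolidate the per-subset quantities $\Expt_B\,\Psi\big(H(\bX_B)/|B|\big)$ into the single term $n\,\Psi\big(H(\bX_B\mid B)/(\lambda n)\big)$.

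Finally I would account for the error term. Two approximations occur: $\lceil\lambda n\rceil\ne\lambda n$, and the argument naturally produces an average of $\Psi\big(H(\bX_S)/|S|\big)$ over sets whose cardinalities fluctuate in a window of width $\m{O}(\sqrt{n\log n})$ about $\lambda n$, which has to be replaced by a single evaluation at cardinality $\lceil\lambda n\rceil$. For this I would use that $k\mapsto H(\bX_S\mid|S|=k)$ is concave and $k\mapsto\tfrac1k H(\bX_S\mid|S|=k)$ is nonincreasing (Han's inequality), together with the boundedness of $\Psi'$ away from $t=1$; since the quantity being redistributed is the deficit $n-H(\bX)$, the accumulated loss is $\m{O}\big(\sqrt{\log n/n}\big)\cdot(n-H(\bX))$, matching $E$. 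I expect the one-dimensional base inequality and, above all, its $q\downarrow 1$ limit — the new logarithmic-Sobolev-type estimate that upgrades MGL — to be by far the main obstacle; the tensorization, the repackaging via convexity of $\Psi$, and the error bookkeeping are comparatively routine.
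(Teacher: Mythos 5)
This statement is quoted verbatim from \cite[Theorem 1.11]{Samorodnitsky15}; the paper supplies no proof of it, so the only meaningful question is whether your outline would actually establish the result. It would not, as it stands. Your architecture is essentially the right one (and does track the reference: an $L^q$-norm inequality for $T_\rho$ proved in one dimension, tensorized over the $n$ coordinates so that the exponent $\rho^2=\lambda$ turns into a random subset of density $\lambda$, then differentiated at $q=1$ to yield an entropy statement, with convexity of $\varphi$ and Han's inequality handling the repackaging and the error term). But the entire mathematical content of the theorem sits in the step you explicitly defer: the one-dimensional base inequality and its $q\downarrow 1$ limit. Everything else in your sketch --- the noise-operator reformulation, the observation that $\|T_\rho f\|_2^2=\langle f,T_\lambda f\rangle$, the failed elementary split into $B$ and $B^c$, the Jensen/convexity consolidation, the $\m{O}(\sqrt{\log n/n})\cdot(n-H(\bX))$ bookkeeping --- is either standard or a correct diagnosis of why the standard tools stall. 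Naming the missing lemma and its role is not the same as proving it, and without it the proposal reduces the theorem to an unproven claim of comparable depth. So this must be graded as a genuine gap rather than a proof by a different route.

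One smaller caution: your error-term accounting asserts that ``the quantity being redistributed is the deficit $n-H(\bX)$'' and hence the loss is $\m{O}(\sqrt{\log n/n})\cdot(n-H(\bX))$. That conclusion is what the theorem states, but the justification you give (concavity of $k\mapsto H(\bX_S\mid |S|=k)$, monotonicity from Han, boundedness of $\varphi'$ away from $t=1$) does not by itself show that the fluctuation cost is proportional to $n-H(\bX)$ rather than to $n$; near $H(\bX)=n$ the derivative $\varphi'$ is bounded but the increments you are controlling must themselves be shown to be $\m{O}(n-H(\bX))$ per unit change in $|S|$. This is fixable, but as written it is a second, independent hand-wave.
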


By Han's inequality~\cite{coverthomas}, the quantity $H(\bX_B|B)/\lambda n$ is monotonically nonincreasing in $\lambda$, and therefore, ignoring the error term $E$, it can be seen that the bound~\eqref{eq:SMGL} is stronger than~\eqref{eq:MGL}.

For our purposes, it will be convenient to replace $H(\bX_B|B)$ with $H(\bX_S|S)$, where $S$ is a random subset of $[n]$ generated by independently sampling each element $i$ with probability $\lambda$. It is easy to verify that for any distribution $P_{\bX}$ on $\{0,1\}^n$ holds
\begin{align}
\lim_{n\to\infty}\frac{H(\bX_B|B)-H(\bX_S|S)}{\lambda n}=0,\nonumber
\end{align}
and we can therefore indeed replace $B$ with $S$ in Theorem~\ref{thm:smgl}, perhaps with a different convergence rate for $E$. In fact, Polyanskiy and Wu~\cite{pw16} distilled from~\cite{Samorodnitsky15} the inequality
\begin{align}
I(U:\bY)\leq I(U:\bX_S|S),\label{eq:PW}
\end{align}
that holds for any random variable $U$ satisfying the Markov relation $U\to\bX\to\bY$. Using~\eqref{eq:PW}, the chain rule of entropy, and the convexity of the MGL function $\varphi(t)\triangleq h(\alpha*h^{-1}(t))$, it is a simple exercise to prove the following form of Theorem~\ref{thm:smgl}.
\begin{proposition}
Let $\lambda=(1-2\alpha)^2$ and let $S$ be a random subset of $[n]$ generated by independently sampling each element
$i$ with probability $\lambda$. Then
\begin{align}
H(\bY)\geq n h\left(\alpha*h^{-1}\left(\frac{H(\bX_S|S)}{\lambda n}\right)\right).\label{eq:SMGL2}
\end{align}
\label{prop:smgl2}
\end{proposition}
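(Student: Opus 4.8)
The plan is to decompose $H(\bY)$ by the chain rule, lower bound each term by one application of the Polyanskiy--Wu inequality \eqref{eq:PW}, and recombine via Jensen's inequality using the convexity of $\varphi$. Write $X^{i-1}=(X_1,\ldots,X_{i-1})$, $Y^{i-1}=(Y_1,\ldots,Y_{i-1})$, so that $H(\bY)=\sum_{i=1}^{n}H(Y_i\mid Y^{i-1})$. Fix $i$ and regard $Y^{i-1}$ as the output of a BSC$(\alpha)$ of blocklength $i-1$ with input $X^{i-1}$. Since the channel is memoryless and $\bX$ is independent of the noise, conditionally on $X^{i-1}$ the pair $(X_i,Z_i)$ --- and hence $Y_i=X_i\oplus Z_i$ --- is independent of $Z^{i-1}$, and therefore of $Y^{i-1}$; thus $Y_i\to X^{i-1}\to Y^{i-1}$ is a Markov chain. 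Applying \eqref{eq:PW} with $U=Y_i$ to this blocklength-$(i-1)$ channel gives $I(Y_i:Y^{i-1})\le I(Y_i:X^{i-1}_{S'}\mid S')$, where $S'$ is obtained by including each element of $[i-1]$ independently with probability $\lambda$. As $S'$ is independent of $Y_i$, this rearranges to $H(Y_i\mid Y^{i-1})\ge H(Y_i\mid X^{i-1}_{S'},S')$. Conditioning further on a realization $\{X^{i-1}_{S'}=v,\ S'=s'\}$, the bit $X_i$ is $\Ber(p_{v,s'})$ and $Y_i$ is $\Ber(\alpha*p_{v,s'})$, so $H(Y_i\mid X^{i-1}_{S'}=v,S'=s')=h(\alpha*p_{v,s'})=\varphi\!\big(H(X_i\mid X^{i-1}_{S'}=v,S'=s')\big)$ (the last identity uses $h(\alpha*(1-p))=h(\alpha*p)$ to cover the case $p_{v,s'}>\tfrac12$). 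Averaging over $(v,s')$ and applying Jensen's inequality to the convex function $\varphi$ yields $H(Y_i\mid Y^{i-1})\ge\varphi(t_i)$, where $t_i:=H(X_i\mid X^{i-1}_{S'},S')$.

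It then remains to show $\sum_{i=1}^{n}t_i=H(\bX_S\mid S)/\lambda$, after which convexity of $\varphi$ finishes the proof, since $H(\bY)\ge\sum_i\varphi(t_i)\ge n\,\varphi\!\big(\tfrac1n\sum_i t_i\big)=n\,\varphi\!\big(H(\bX_S\mid S)/(\lambda n)\big)$, which is \eqref{eq:SMGL2}. For the identity, represent $S$ by i.i.d.\ $\Ber(\lambda)$ indicators $\sigma_1,\ldots,\sigma_n$, apply the chain rule to $H(\bX_S\mid S)$ along the elements of $S$ in increasing order, and take the expectation over $\sigma$; by independence of the indicators, the $i$-th contribution factors as $\Pr(\sigma_i=1)$ times a quantity depending only on $\sigma_1,\ldots,\sigma_{i-1}$, namely $\lambda\,H\!\big(X_i\mid X^{i-1}_{S\cap[i-1]},\,S\cap[i-1]\big)$. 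Since $S\cap[i-1]$ has exactly the law of the set $S'$ used above, this equals $\lambda t_i$, and summing over $i$ gives the claim.

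I expect the main obstacle to be purely one of bookkeeping: verifying the Markov relation $Y_i\to X^{i-1}\to Y^{i-1}$ carefully (it rests only on $\bX$ being independent of the noise and on the memorylessness of the BSC), and then checking that the random subset of $[i-1]$ delivered by \eqref{eq:PW} at stage $i$ is distributed exactly as the subset obtained by conditioning on $\sigma_1,\ldots,\sigma_{i-1}$, so that the two sums match term by term. The analytic ingredients themselves --- \eqref{eq:PW}, the single-bit identity $H(X\oplus Z)=\varphi(H(X))$, and Jensen's inequality for the convex $\varphi$ --- are all immediate.
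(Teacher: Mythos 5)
Your proof is correct and follows exactly the route the paper indicates (it only sketches this proposition as "a simple exercise" via the Polyanskiy--Wu inequality \eqref{eq:PW}, the chain rule, and convexity of $\varphi$): chain rule on $H(\bY)$, the Markov relation $Y_i\to X^{i-1}\to Y^{i-1}$ to invoke \eqref{eq:PW} termwise, the single-letter MGL identity plus Jensen, and the bookkeeping identity $\sum_i H(X_i\mid \bX_{S\cap[i-1]},S\cap[i-1])=H(\bX_S\mid S)/\lambda$. All the details you fill in check out, including the symmetry $h(\alpha*(1-p))=h(\alpha*p)$ needed when the conditional bias exceeds $\tfrac12$.
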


\section{Main Result}\label{sec:main}

In order to apply Proposition~\ref{prop:smgl2} for lower bounding $\bar{H}(Y)$, we need to evaluate the quantity $H(\bX_S|S)/\lambda n$ for symmetric Markov processes $\{X_n\}$. We will use the notation
\begin{align}
q^{*k}\triangleq \underbrace{q*q*\cdots*q}_{k \text{ times}},\nonumber
\end{align}
and note that
\begin{align}
q^{*k}=\Pr\left(W_1\oplus\cdots \oplus W_k =1\right)=\frac{1-(1-2q)^k}{2}.\label{eq:qstar}
\end{align}

\begin{proposition}
Let $0< \lambda< 1$, and let $S$ be a random subset of $[n]$ generated by independently sampling each element
$i$ with probability $\lambda$. Then
\begin{align}
\lim_{n\to\infty}\frac{H(\bX_S|S)}{\lambda n}=\mathbb{E}h\left(q^{*G}\right),\nonumber
\end{align}
where $G$ is a geometric random variable with parameter $\lambda$, i.e. $\Pr(G=g)=(1-\lambda)^{g-1}\lambda$ for $g=1,2,\ldots$.
\label{prop:MarkovBEC}
\end{proposition}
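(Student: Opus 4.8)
The plan is to condition on the random set $S$, use the Markov and symmetry structure of $\{X_n\}$ to collapse $H(\bX_S\mid S)$ into a sum over the gaps between consecutive sampled coordinates, and then average over $S$ by a direct computation of the expected number of gaps of each length.

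First I would fix a realization $S=\{i_1<i_2<\cdots<i_k\}$ with $k\ge1$ and apply the chain rule,
\[
H(\bX_S\mid S=\{i_1,\dots,i_k\})=H(X_{i_1})+\sum_{j=2}^{k}H(X_{i_j}\mid X_{i_{j-1}},\dots,X_{i_1}).
\]
Since $\{X_n\}$ is a first-order Markov chain, for the increasing indices $i_1<\cdots<i_j$ the variable $X_{i_j}$ is conditionally independent of $X_{i_1},\dots,X_{i_{j-2}}$ given $X_{i_{j-1}}$, so the $j$-th term reduces to $H(X_{i_j}\mid X_{i_{j-1}})$. Writing $X_{i_j}=X_{i_{j-1}}\oplus(W_{i_{j-1}+1}\oplus\cdots\oplus W_{i_j})$, the increment is independent of $X_{i_{j-1}}$ and, by \eqref{eq:qstar}, is $\Ber(q^{*(i_j-i_{j-1})})$; together with $X_{i_1}\sim\Ber(\tfrac12)$ (stationarity) this gives $H(\bX_S\mid S=\{i_1,\dots,i_k\})=1+\sum_{j=2}^{k}h(q^{*(i_j-i_{j-1})})$ for $k\ge1$, while the empty realization contributes $0$.

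Next I would average over $S$. The constant $1$ contributes $\Pr(|S|\ge1)=1-(1-\lambda)^n$, which is $O(1)$ and vanishes after normalizing by $\lambda n$. For the remaining term I would regroup the expectation by gap length: for each $g\ge1$ and each left endpoint $i\in\{1,\dots,n-g\}$, the pair $\{i,i+g\}$ consists of consecutive elements of $S$ exactly when $i\in S$, $i+1,\dots,i+g-1\notin S$, and $i+g\in S$, an event of probability $\lambda^2(1-\lambda)^{g-1}$; by linearity of expectation the expected number of consecutive sampled pairs at gap $g$ is therefore $(n-g)\lambda^2(1-\lambda)^{g-1}$, so
\[
\frac{H(\bX_S\mid S)}{\lambda n}=\frac{1-(1-\lambda)^n}{\lambda n}+\sum_{g=1}^{n-1}\frac{n-g}{n}\,\lambda(1-\lambda)^{g-1}\,h(q^{*g}).
\]
Letting $n\to\infty$, each summand is bounded by $\lambda(1-\lambda)^{g-1}$ uniformly in $n$ (using $h\le1$ and $\tfrac{n-g}{n}\le1$), which is summable, and $\tfrac{n-g}{n}\to1$ for every fixed $g$, so dominated convergence gives the limit $\sum_{g=1}^{\infty}\lambda(1-\lambda)^{g-1}h(q^{*g})=\mathbb{E}\,h(q^{*G})$ with $G$ geometric with parameter $\lambda$.

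I expect the main obstacle to be the reduction to pairwise conditional entropies, which requires invoking the Markov property along an arbitrary increasing subsequence of times (so that conditioning on the nearest sampled predecessor already suffices), together with the bookkeeping of the boundary terms in the gap count and of the $k\le1$ cases; once these are settled the passage to the limit is a routine dominated-convergence argument.
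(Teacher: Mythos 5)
Your proof is correct, and it reaches the result by a genuinely different route in the step where the paper does the real work. Both arguments share the same opening reduction: chain rule plus the first-order Markov property along the increasing sampled indices, so that $H(\bX_S\mid S=s)$ collapses to $H(X_{i_1})+\sum_j h\bigl(q^{*(i_j-i_{j-1})}\bigr)$ (you make the symmetric structure explicit via the $W$-increments, while the paper keeps $H(X_{G+1}\mid X_1)$ generic and specializes to $h(q^{*G})$ only at the end, which is what lets it reuse the argument verbatim in Proposition~\ref{prop:nonsymMarkovBEC}). The divergence is in how the average over $S$ is taken. The paper represents $S$ as a renewal process with i.i.d.\ geometric gaps $G_i$, truncated at the random index $K$, and then needs a law-of-large-numbers/concentration argument for $\Pr(\Bin(n-g_i,\lambda)\geq i-1)$ to show that the number of surviving terms is $\approx\lambda n$ and that each contributes $\mathbb{E}\,h(q^{*G})$. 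You instead compute the exact finite-$n$ expectation by counting consecutive sampled pairs at each gap length $g$, getting the closed form
\begin{align}
\frac{H(\bX_S\mid S)}{\lambda n}=\frac{1-(1-\lambda)^n}{\lambda n}+\sum_{g=1}^{n-1}\frac{n-g}{n}\,\lambda(1-\lambda)^{g-1}\,h\!\left(q^{*g}\right),\nonumber
\end{align}
after which the limit is a one-line dominated-convergence argument. This buys you an exact nonasymptotic identity (and hence an explicit $\m{O}(1/n)$ convergence rate) and avoids the concentration step entirely; what it gives up is the renewal-process viewpoint, which is the form the paper actually quotes in~\eqref{eq:Markovlimit} as $\lambda H(X_{G+1}\mid X_1,G)$ and which generalizes immediately to the nonsymmetric case. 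Your argument generalizes there too, with $h(q^{*g})$ replaced by $\pi_0 h(q^{\#g}_{01})+\pi_1 h(q^{\#g}_{10})$, so nothing is lost.
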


\begin{proof}
Let $G_i$, $i=1,2,\ldots$, be a sequence of i.i.d. geometric random variables with parameter $\lambda$. Define the autoregressive process
\begin{align}
A_k=\sum_{i=1}^k G_k, \ k=1,2,\ldots,\nonumber
\end{align}
and define the random variable $K$ as the largest $k$ for which $A_k\leq n$. Clearly, the subset $S$ and the subset $\{A_1,\ldots,A_K\}$ have the same distribution, and therefore
\begin{align}
&\frac{H(\bX_S|S)}{n} =\frac{1}{n}\mathbb{E}\sum_{i=1}^K H(X_{A_i}|X_{A_{i-1}},\ldots,X_{A_1})\nonumber\\
&=\frac{1}{n} \mathbb{E}\left(\sum_{i=1}^n H(X_{A_i}|X_{A_{i-1}},\ldots,X_{A_1})\mathds{1}(i\leq K)\right)\nonumber\\
&=\frac{1}{n} \mathbb{E}\left(\sum_{i=1}^n H(X_{A_i}|X_{A_{i-1}})\mathds{1}(i\leq K)\right)\label{eq:Markovity}\\
&=\frac{1}{n} \mathbb{E}\left(\mathds{1}(1\leq K)+\sum_{i=2}^n H(X_{A_i-A_{i-1}+1}|X_1)\mathds{1}(i\leq K)\right)\label{eq:stationarity}\\
&=\frac{1}{n} \left(\Pr(K\geq 1)+\sum_{i=2}^n \mathbb{E}\left(H(X_{G_i+1}|X_1)\mathds{1}(i\leq K)\right)\right),\label{eq:SumExpt}
\end{align}
where $\mathds{1}(T)$ is an indicator on the event $T$,~\eqref{eq:Markovity} follows since $\{X_n\}$ is a first-order Markov process, and~\eqref{eq:stationarity} follows from the stationarity of $\{X_n\}$. For any $2\leq i\leq n$ we have
\begin{align}
&\mathbb{E}\left(H(X_{G_i+1}|X_1)\mathds{1}(i\leq K)\right)\nonumber\\
&=\mathbb{E}_{G_i}\left(H(X_{G_i+1}|X_1)\Pr(K\geq i|G_i)\right)\nonumber\\
&=\mathbb{E}_{G_i}\left(H(X_{G_i+1}|X_1)\Pr(\Bin(n-G_i,\lambda)\geq i-1)\right)
\nonumber
\end{align}
By the law of large numbers, for any $\epsilon>0$ and fixed $g_i$, there exists some $N_0$ such that for all $n>N_0$ holds
\begin{align}
\Pr(\Bin(n-g_i,\lambda)\geq i-1)\in\begin{cases}
[1-\epsilon,1) & i\leq (\lambda-\epsilon) n\\
(0,\epsilon] & i\geq (\lambda+\epsilon)n
\end{cases}
\nonumber
\end{align}
Combining this with~\eqref{eq:SumExpt} gives that
\begin{align}
\lim_{n\to\infty}\frac{1}{n} H(\bX_S|S)&=\lambda\mathbb{E}H(X_{G+1}|X_1)\nonumber\\
&=\lambda H(X_{G+1}|X_1,G),\label{eq:Markovlimit}
\end{align}
and our claim follows since $H(X_{G+1}|X_1,G)=\mathbb{E}h\left(q^{*G}\right)$ for stationary symmetric Markov processes.
\end{proof}

\begin{remark}
An expression similar to~\eqref{eq:Markovlimit} can also be recovered from~\cite[Corollary II.2]{lg14}.
\end{remark}

Our main result now follows directly from combining Propositions~\ref{prop:smgl2} and~\ref{prop:MarkovBEC} and using the continuity of the MGL function $\varphi(t)$.
\begin{theorem}
The entropy rate of the process obtained by passing a symmetric binary Markov process with transition probability $q$ through a BSC with crossover probability $\alpha$ satisfies
\begin{align}
\bar{H}(Y)\geq h\left(\alpha*h^{-1}\left(\mathbb{E}h\left(q^{*G}\right)\right)\right)\label{eq:hmmlb},
\end{align}
where $G$ is a geometric random variable with parameter $\lambda=(1-2\alpha)^2$.
\label{thm:hmmlb}
\end{theorem}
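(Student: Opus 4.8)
The plan is to invoke Proposition~\ref{prop:smgl2} at each finite blocklength $n$ and then let $n\to\infty$, using Proposition~\ref{prop:MarkovBEC} to identify the limit of the argument of the MGL function. To begin, I would observe that since $\{X_n\}$ is stationary and the BSC acts memorylessly, $\{Y_n\}$ is a stationary process, so the limit defining $\bar H(Y)$ exists. Now fix $\alpha\in(0,\tfrac12)$, set $\lambda=(1-2\alpha)^2\in(0,1)$, and let $S\subseteq[n]$ be obtained by including each coordinate independently with probability $\lambda$. Because $\bX$ is binary, $H(\bX_S\mid S)\le\mathbb{E}|S|=\lambda n$, so the ratio $t_n\triangleq H(\bX_S\mid S)/(\lambda n)$ lies in $[0,1]$ and $h^{-1}(t_n)$ is well defined. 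Dividing the bound~\eqref{eq:SMGL2} of Proposition~\ref{prop:smgl2} by $n$ then yields
\begin{align}
\frac{H(Y_1,\ldots,Y_n)}{n}\ \ge\ \varphi(t_n),\nonumber
\end{align}
where $\varphi(t)=h\left(\alpha*h^{-1}(t)\right)$ is the MGL function.

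The second step is to take $n\to\infty$. By Proposition~\ref{prop:MarkovBEC} (applied with our value of $\lambda$), $t_n\to\mathbb{E}h\left(q^{*G}\right)$, where $G$ is geometric with parameter $\lambda=(1-2\alpha)^2$. Since $h$ is a continuous strictly increasing bijection of $[0,\tfrac12]$ onto $[0,1]$, its restricted inverse $h^{-1}$ is continuous on $[0,1]$, and hence $\varphi$ is continuous on the compact interval $[0,1]$; therefore $\varphi(t_n)\to\varphi\left(\mathbb{E}h(q^{*G})\right)$. Passing to the limit in the displayed inequality gives
\begin{align}
\bar H(Y)=\lim_{n\to\infty}\frac{H(Y_1,\ldots,Y_n)}{n}\ \ge\ \varphi\left(\mathbb{E}h\left(q^{*G}\right)\right)=h\left(\alpha*h^{-1}\left(\mathbb{E}h\left(q^{*G}\right)\right)\right),\nonumber
\end{align}
which is exactly~\eqref{eq:hmmlb}.

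I do not anticipate a genuine obstacle: the two substantive ingredients --- the strengthened MGL in the i.i.d.-sampling form (Proposition~\ref{prop:smgl2}) and the computation of the projected-entropy limit for a symmetric stationary Markov chain (Proposition~\ref{prop:MarkovBEC}) --- are already in hand, and the theorem is merely their composition together with a continuity argument. The only place that actually requires a line of justification is the interchange of the limit with the nonlinear map $\varphi$; this is legitimate precisely because $\varphi$ is continuous and the arguments $t_n$ remain in $[0,1]$, the domain on which $h^{-1}$ is defined, so no uniformity or monotonicity in $n$ is needed.
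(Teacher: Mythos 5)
Your proposal is correct and matches the paper's own argument, which likewise obtains Theorem~\ref{thm:hmmlb} by combining Proposition~\ref{prop:smgl2} with Proposition~\ref{prop:MarkovBEC} and invoking the continuity of the MGL function $\varphi(t)=h\left(\alpha*h^{-1}(t)\right)$; you have simply written out the limit-interchange step that the paper leaves implicit.
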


\section{Asymptotic Analysis and Numerical Examples}\label{sec:asymptotics}

In this section we evaluate the bound from Theorem~\ref{thm:hmmlb} in the limits of  $\alpha\to\tfrac{1}{2}$ and $q$ fixed (very noisy regime), and in the limit of $q\to\tfrac{1}{2}$ and $\alpha$ fixed (fast transitions regime).

\begin{theorem}
Let $q$ be fixed and $\alpha=\tfrac{1}{2}-\epsilon$. Then
\begin{align}
\bar{H}(Y)\geq 1-16\epsilon^4 \sum_{k=1}^\infty \frac{\log(e)}{2k(2k-1)}\frac{(1-2q)^{2k}}{1-(1-2q)^{2k}}+o(\epsilon^4).\nonumber
\end{align}
\label{thm:verynoisy}
\end{theorem}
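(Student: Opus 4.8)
The plan is to expand the closed-form lower bound of Theorem~\ref{thm:hmmlb} directly in powers of $\epsilon$; there is no nested limit to worry about, since that theorem already expresses the bound on $\bar{H}(Y)$ as an explicit function of $(\alpha,q)$. Throughout, write $\beta\triangleq 1-2q\in(0,1)$, so that $q^{*G}=\tfrac{1-\beta^G}{2}$ by~\eqref{eq:qstar}, and note that $\alpha=\tfrac12-\epsilon$ gives $\lambda=(1-2\alpha)^2=4\epsilon^2$. The only analytic input needed is the elementary power series
\begin{align}
h\!\left(\tfrac{1-x}{2}\right)=1-\frac{\log e}{2}\sum_{k=1}^\infty\frac{x^{2k}}{k(2k-1)},\qquad |x|\le 1,\nonumber
\end{align}
obtained by integrating $h'\!\left(\tfrac{1-x}{2}\right)=\log\frac{1+x}{1-x}$ term by term.

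The first step is to estimate $t\triangleq\mathbb{E}h(q^{*G})$ to order $\epsilon^2$. Applying the series with $x=\beta^G$ and interchanging summation and expectation (justified since all terms are nonnegative) gives
\begin{align}
1-t=\frac{\log e}{2}\sum_{k=1}^\infty\frac{1}{k(2k-1)}\,\mathbb{E}\!\left[\beta^{2kG}\right].\nonumber
\end{align}
Using the geometric identity $\mathbb{E}[z^{G}]=\lambda z/(1-(1-\lambda)z)$ and expanding in $\lambda$ yields $\mathbb{E}[\beta^{2kG}]=\lambda\beta^{2k}/(1-\beta^{2k})+O(\lambda^2)$, with a remainder that is summable against $1/(k(2k-1))$ because $\beta<1$. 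Hence, with $\lambda=4\epsilon^2$,
\begin{align}
1-t=2\log e\,\epsilon^2\,C+O(\epsilon^4),\qquad C\triangleq\sum_{k=1}^\infty\frac{1}{k(2k-1)}\frac{\beta^{2k}}{1-\beta^{2k}}.\nonumber
\end{align}

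The second step inverts $h$ and plugs back into the outer function $h(\alpha*h^{-1}(\cdot))$. Writing $h^{-1}(t)=\tfrac12-\delta$ with $\delta\to 0$, the same series gives $t=1-2\log e\,\delta^2+O(\delta^4)$, hence $\delta^2=\tfrac{1-t}{2\log e}+O((1-t)^2)=\epsilon^2 C+O(\epsilon^4)$. By the standard identity $(\tfrac12-a)*(\tfrac12-b)=\tfrac12-2ab$ we have $\alpha*h^{-1}(t)=\tfrac12-2\epsilon\delta$, so one more application of the series with $x=4\epsilon\delta$ gives
\begin{align}
h\!\left(\alpha*h^{-1}(t)\right)=1-\frac{\log e}{2}\,(4\epsilon\delta)^2+O\!\left((\epsilon\delta)^4\right)=1-8\log e\,\epsilon^2\delta^2+O(\epsilon^8).\nonumber
\end{align}
Substituting $\epsilon^2\delta^2=\epsilon^4 C+O(\epsilon^6)$ into~\eqref{eq:hmmlb} yields $\bar{H}(Y)\ge 1-8\log e\,\epsilon^4 C+o(\epsilon^4)$, and since $8\log e/(k(2k-1))=16\log(e)/(2k(2k-1))$ this is precisely the stated bound.

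The part requiring the most care is the bookkeeping of the nested remainders together with the two interchanges of limiting operations (sum versus expectation, and the asymptotic inversion of $h$ near $\tfrac12$): one must check the $O(\cdot)$ estimates are uniform enough that the $O(\lambda^2)$ correction to $1-t$ and the $O(\delta^4)$ error in the inversion of $h$ push down to order $\epsilon^6$, hence $o(\epsilon^4)$, in the final expansion, and that the tail of the series defining $C$ is controlled — which is where the assumption $q>0$, i.e. $\beta<1$, enters essentially.
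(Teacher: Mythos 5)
Your proposal is correct, and the first half of it — expanding $1-\mathbb{E}h(q^{*G})$ via the Taylor series of $h$ about $\tfrac12$ and the generating function $\mathbb{E}[z^G]=\lambda z/(1-(1-\lambda)z)$, then extracting the $O(\lambda)=O(\epsilon^2)$ leading term — is exactly the computation in the paper (your $2\log e\,\epsilon^2 C$ agrees with the paper's \eqref{eq:GeoExpt} after identifying $\tfrac{\log e}{2}C$ with $\sum_k\tfrac{\log e}{2k(2k-1)}\tfrac{(1-2q)^{2k}}{1-(1-2q)^{2k}}$). Where you diverge is in handling the outer function: you invert $h$ near its maximum, use $(\tfrac12-\epsilon)*(\tfrac12-\delta)=\tfrac12-2\epsilon\delta$, and re-expand $h$, which requires the extra bookkeeping you flag (uniformity of the $O(\delta^4)$ inversion error, etc.). The paper instead invokes the exact pointwise inequality $h(\alpha*\gamma)\geq 1-(1-2\alpha)^2(1-h(\gamma))$ from \eqref{eq:osbound}, which collapses that entire step to the one-line bound $h(\alpha*h^{-1}(\beta))\geq 1-4\epsilon^2(1-\beta)$ valid for every $\epsilon$, so the only asymptotic error to track is the one already present in the expansion of $\beta$. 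The trade-off: the paper's route is shorter and avoids any delicate inversion of $h$, while your route additionally shows that the right-hand side of \eqref{eq:hmmlb} \emph{equals} $1-8\log(e)\,C\,\epsilon^4+o(\epsilon^4)$ rather than merely dominating it, i.e., the leading-order constant $16\sum_k\tfrac{\log e}{2k(2k-1)}\tfrac{(1-2q)^{2k}}{1-(1-2q)^{2k}}$ cannot be improved within Theorem~\ref{thm:hmmlb}. Both arguments need $q>0$ (equivalently $1-2q<1$) so that the series defining $C$ converges, as you correctly note.
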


\begin{proof}
By, e.g.,~\cite[Lemma 1]{os15it}, we have that for any $0\leq\alpha,\gamma\leq 1$ holds
\begin{align}
h(\alpha*\gamma)&\geq h(\gamma)+(1-h(\gamma))\cdot 4\alpha(1-\alpha)\nonumber\\
&=1-(1-2\alpha)^2\left(1-h(\gamma)\right).\label{eq:osbound}
\end{align}
Setting $\beta\triangleq \mathbb{E}h\left(q^{*G}\right)$, the RHS of~\eqref{eq:hmmlb} reads $h(\alpha*h^{-1}(\beta))$, which, by~\eqref{eq:osbound} and the parametrization $\alpha=\tfrac{1}{2}-\epsilon$, can be bounded as
\begin{align}
h\left(\alpha*h^{-1}(\beta)\right)&\geq 1-4\epsilon^2(1-\beta).\nonumber
\end{align}
It therefore, remains to approximate $\beta$. Recall the Taylor expansion of the binary entropy function
\begin{align}
h\left(\frac{1}{2}-p\right)=1-\sum_{k=1}^{\infty}\frac{\log(e)}{2k(2k-1)}(2p)^{2k}.\label{eq:EntTay}
\end{align}
%
Using~\eqref{eq:qstar}, we have
\begin{align}
\beta&=\mathbb{E}h\left(\frac{1-(1-2q)^G}{2} \right)\nonumber\\
&=1-\sum_{k=1}^{\infty}\frac{\log(e)}{2k(2k-1)}\mathbb{E}(1-2q)^{2kG}\label{eq:infsum}\\
&=1-\sum_{k=1}^{\infty}\frac{\log(e)}{2k(2k-1)}\frac{\lambda(1-2q)^{2k}}{1-(1-\lambda)(1-2q)^{2k}},\label{eq:GeoExpt}
\end{align}
where~\eqref{eq:infsum} is justified since the sum $\sum_{k=1}^{\infty}\frac{\log(e)}{2k(2k-1)}\mathbb{E}(1-2q)^{2kG}$ converges, and in~\eqref{eq:GeoExpt} we have used the fact that $\mathbb{E} t^G=\tfrac{\lambda t}{1-(1-\lambda)t}$. To further approximate~\eqref{eq:GeoExpt}, we write
\begin{align}
\frac{\lambda(1-2q)^{2k}}{1-(1-\lambda)(1-2q)^{2k}}&=\frac{\lambda(1-2q)^{2k}}{1-(1-2q)^{2k}}\cdot\frac{1}{1+\frac{\lambda(1-2q)^{2k}}{1-(1-2q)^{2k}}}\nonumber\\
&=\frac{\lambda(1-2q)^{2k}}{1-(1-2q)^{2k}}\cdot\left(1+\m{O}(\lambda)\right)\nonumber\\
&=4\epsilon^2\frac{(1-2q)^{2k}}{1-(1-2q)^{2k}}+\m{O}\left(\epsilon^4\right).\nonumber
\end{align}
Consequently,
\begin{align}
\beta&=1-4\epsilon^2\sum_{k=1}^{\infty}\frac{\log(e)}{2k(k-1)}\frac{(1-2q)^{2k}}{1-(1-2q)^{2k}}+\m{O}\left(\epsilon^4\right).\nonumber
\end{align}
which yields the desired result.
\end{proof}

To date, the best known upper and lower bounds on $\bar{H}(Y)$ in the very noisy regime were the ones found in~\cite[Theorem 4.13]{ow11}. In particular, the ratio $\tfrac{1-\bar{H}(Y)}{\epsilon^4}$ was bounded form above and below.
The upper and lower bounds from~\cite[Theorem 4.13]{ow11} on $(1-\bar{H}(Y))/\epsilon^4$ are plotted in Figure~\ref{fig:verynoisy}, along with the upper bound from Theorem~\ref{thm:verynoisy}. It is seen that Theorem~\ref{thm:verynoisy} improves upon the best known lower bounds on $\bar{H}(Y)$ in the limit of $\alpha\to\tfrac{1}{2}$. Furthermore, unlike~\cite[Theorem 4.13]{ow11} that only holds for $q\geq\tfrac{1}{4}$, our result holds for all $q$.


\begin{figure}[h]
\begin{center}
\includegraphics[width=1 \columnwidth]{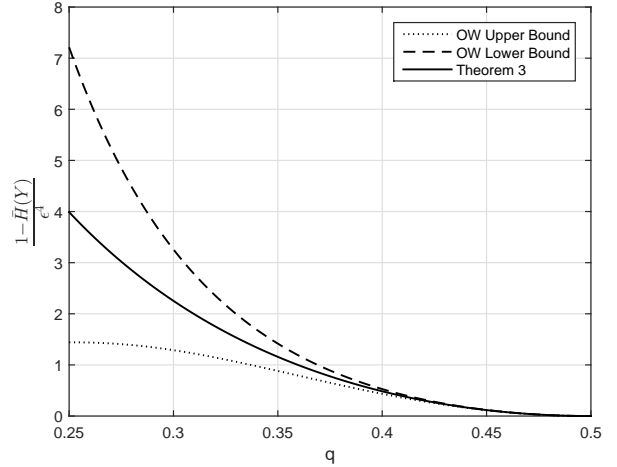}
\end{center}
\caption{Comparison between the bounds from~\cite[Theorem 4.13]{ow11} on $\bar{H}(Y)$ in the very noisy regime, and the new bound from Theorem~\ref{thm:verynoisy}.}
\label{fig:verynoisy}
\end{figure}

\begin{figure*}[!ht]
\begin{center}
\subfloat[$\alpha=0.11$]{
\includegraphics[width=0.45\textwidth]{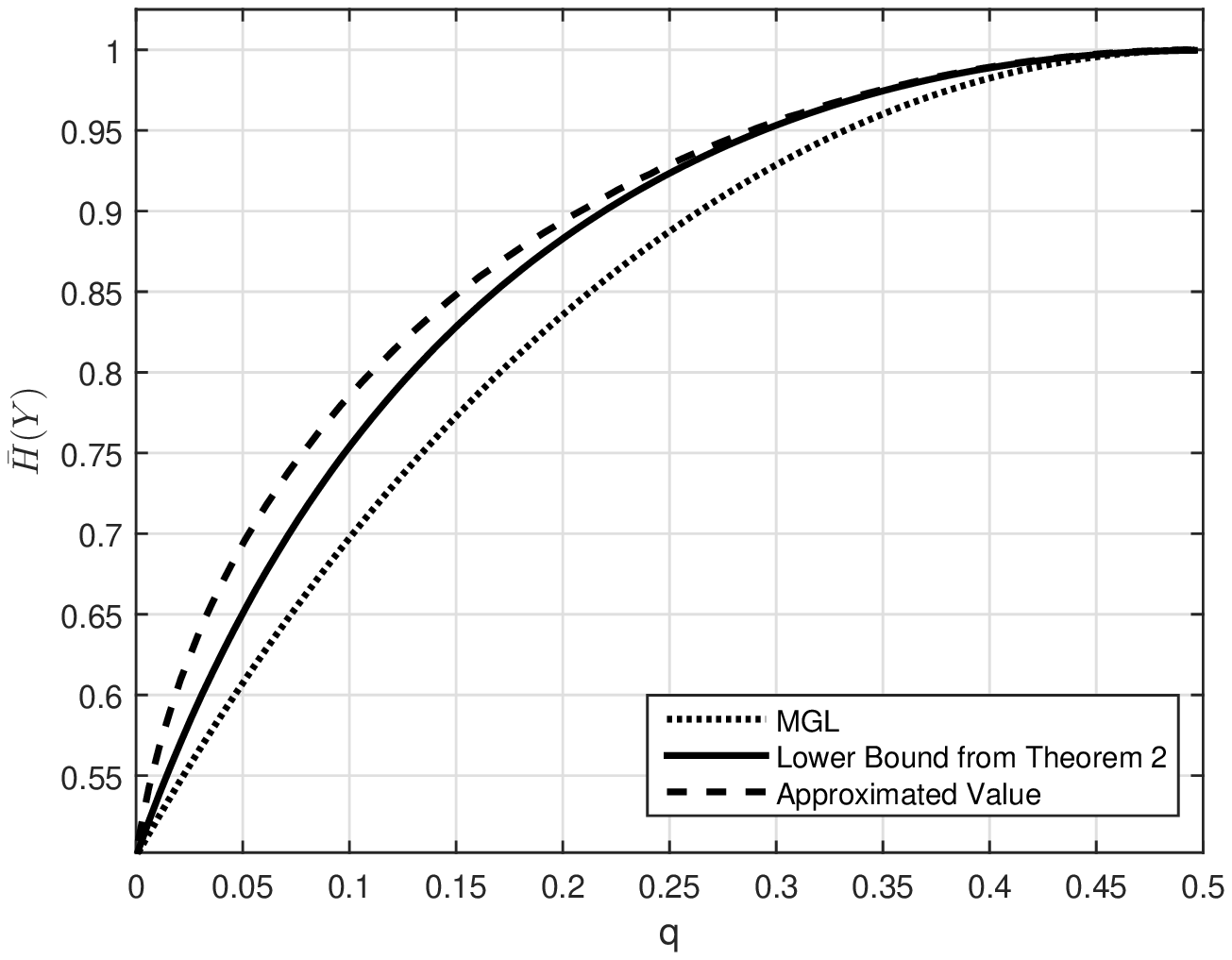}
\label{fig:comp1}}
\qquad
\subfloat[$q=0.11$]{
\includegraphics[width=0.45\textwidth]{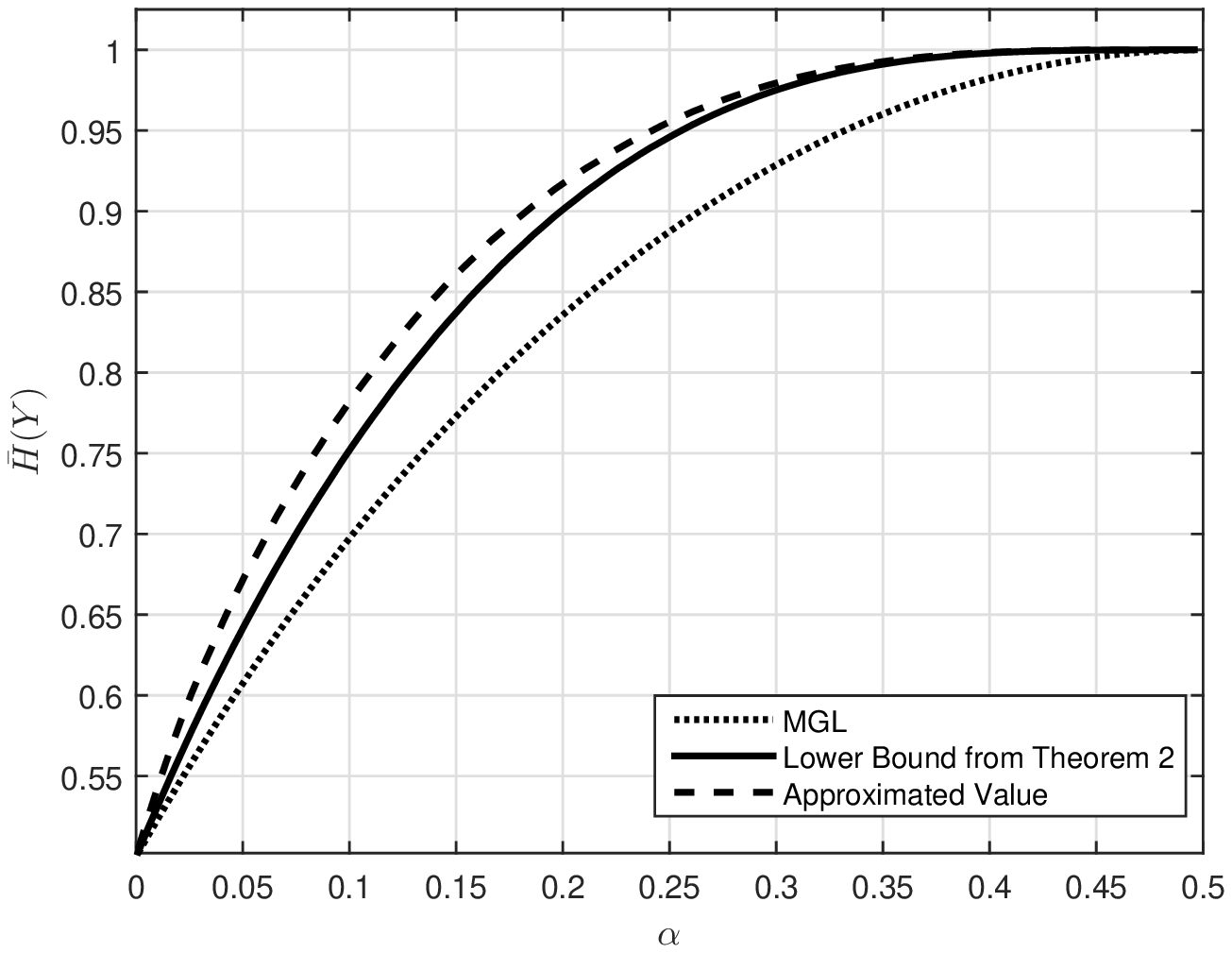}
\label{fig:comp2}}
\end{center}
\caption{Comparison between the lower bound~\eqref{eq:hmmlb}, MGL based bound~\eqref{eq:mglmarkov}, and the approximate value of $\bar{H}(Y)$ computed using~\cite[Algorithm 4.25]{ow11}.}
\end{figure*}

Next, we move to show that the lower bound from Theorem~\ref{thm:hmmlb} is tight in the extreme regime of fast transitions, i.e., $q\to\tfrac{1}{2}$ and $\alpha$ fixed. Let $q=\tfrac{1}{2}-\epsilon$. With this parametrization,~\eqref{eq:GeoExpt} reads
\begin{align}
\beta&=1-\sum_{k=1}^{\infty}\frac{\log(e)}{2k(2k-1)}\frac{\lambda(2\epsilon)^{2k}}{1-(1-\lambda)(2\epsilon)^{2k}}\\
&=1-2\log(e)\lambda\epsilon^2+\m{O}(\epsilon^4).\nonumber
\end{align}
Now, using~\eqref{eq:osbound} and Theorem~\ref{thm:hmmlb}, we have the following proposition.
\begin{proposition}
Let $\alpha$ be fixed and $q=\tfrac{1}{2}-\epsilon$. Then
\begin{align}
1-\bar{H}(Y)&\leq \lambda(1-\beta)\nonumber\\
&=2\log(e)\lambda^2\epsilon^2+\m{O}(\epsilon^4)\nonumber\\
&=2\log(e)(1-2\alpha)^4\epsilon^2+\m{O}(\epsilon^4)\nonumber
\end{align}
\end{proposition}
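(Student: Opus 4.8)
The plan is to derive the final proposition directly from Theorem~\ref{thm:hmmlb} together with the already-established linear-in-$\lambda$ expansion of $\beta$. The starting point is the bound~\eqref{eq:hmmlb}, which states $\bar{H}(Y)\geq h(\alpha*h^{-1}(\beta))$ with $\beta=\mathbb{E}h(q^{*G})$. The first step is to invoke the inequality~\eqref{eq:osbound} of~\cite{os15it}, applied with $\gamma=h^{-1}(\beta)$, to get $h(\alpha*h^{-1}(\beta))\geq 1-(1-2\alpha)^2(1-\beta)=1-\lambda(1-\beta)$. Rearranging yields $1-\bar{H}(Y)\leq\lambda(1-\beta)$, which is the first displayed line of the proposition.

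Next I would substitute the value of $\beta$ under the parametrization $q=\tfrac{1}{2}-\epsilon$. Using~\eqref{eq:GeoExpt} with $(1-2q)=2\epsilon$, the $k=1$ term of the series contributes $\frac{\log(e)}{2}\cdot\frac{\lambda(2\epsilon)^2}{1-(1-\lambda)(2\epsilon)^2}=2\log(e)\lambda\epsilon^2\cdot(1+\mathcal{O}(\epsilon^2))$, and every term with $k\geq 2$ is $\mathcal{O}(\epsilon^4)$ uniformly (the tail of the series is dominated by a convergent geometric-type series in $(2\epsilon)^{2k}$, so interchanging sum and limit is harmless). Hence $1-\beta=2\log(e)\lambda\epsilon^2+\mathcal{O}(\epsilon^4)$, which is precisely the expansion stated just before the proposition. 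Multiplying by $\lambda$ gives $\lambda(1-\beta)=2\log(e)\lambda^2\epsilon^2+\mathcal{O}(\epsilon^4)$, and finally substituting $\lambda=(1-2\alpha)^2$ produces $2\log(e)(1-2\alpha)^4\epsilon^2+\mathcal{O}(\epsilon^4)$, completing the chain of equalities in the statement.

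There is essentially no hard part here, since both ingredients—the reverse-MGL-type inequality~\eqref{eq:osbound} and the series expansion of $\beta$—are already in hand; the proposition is just their composition. The only point requiring a modicum of care is the justification that the $\mathcal{O}(\epsilon^4)$ error term in the expansion of $\beta$ is genuinely uniform, i.e.\ that the error from truncating the infinite sum after $k=1$ and from the approximation $\frac{\lambda t}{1-(1-\lambda)t}=\lambda t(1+\mathcal{O}(\lambda t))$ can both be absorbed into $\mathcal{O}(\epsilon^4)$ with $\lambda$ fixed. This is routine: for fixed $\lambda\in(0,1)$ and $\epsilon$ small enough that $(2\epsilon)^2<\tfrac{1}{2}$, one has $\frac{\lambda(2\epsilon)^{2k}}{1-(1-\lambda)(2\epsilon)^{2k}}\leq 2\lambda(2\epsilon)^{2k}$, so $\sum_{k\geq 2}\frac{\log(e)}{2k(2k-1)}\cdot\frac{\lambda(2\epsilon)^{2k}}{1-(1-\lambda)(2\epsilon)^{2k}}\leq 2\lambda\log(e)(2\epsilon)^4\sum_{k\geq 2}(2\epsilon)^{2k-4}=\mathcal{O}(\epsilon^4)$, and the $k=1$ term expands as claimed. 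With this remark the proof is complete.
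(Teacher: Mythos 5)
Your proposal is correct and follows essentially the same route as the paper: the first line comes from composing Theorem~\ref{thm:hmmlb} with the inequality~\eqref{eq:osbound} at $\gamma=h^{-1}(\beta)$, and the expansion $1-\beta=2\log(e)\lambda\epsilon^2+\m{O}(\epsilon^4)$ is read off from~\eqref{eq:GeoExpt} with $1-2q=2\epsilon$, exactly as in the text preceding the proposition. Your explicit justification that the $k\geq 2$ tail of the series is uniformly $\m{O}(\epsilon^4)$ is a small but welcome addition of rigor that the paper leaves implicit.
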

In~\cite[Theorem 4.12]{ow11} it was proved that for $0\leq \alpha\leq \tfrac{1}{2}$, $q=\tfrac{1}{2}-\epsilon$, as $\epsilon\downarrow0$
\begin{align}
1-\bar{H}(Y)=2\log(e)(1-2\alpha)^4\epsilon^2+o(\epsilon^3).\nonumber
\end{align}
It therefore follows that the bound from Theorem~\ref{thm:hmmlb} becomes tight as $q\to\tfrac{1}{2}$.

It can be shown that in the regimes of very high-SNR ($\alpha\to 0$ and $q$ fixed) and rare transitions ($q\to 0$ and $\alpha$ fixed), the bound from Theorem~\ref{thm:hmmlb} is looser than the bounds found in~\cite[Theorem 4.11]{ow11} and in~\cite{pq11}, respectively.

For any pair of finite values of $(\alpha,q)$, the entropy rate $\bar{H}(Y)$ can be approximated to an arbitrary precision. For example,~\cite[Theorem 4.5.1]{coverthomas} shows that
\begin{align}
H(Y_n|Y_{n-1}\ldots,Y_1,X_1)\leq \bar{H}(Y)\leq H(Y_n|Y_{n-1}\ldots,Y_1)\nonumber
\end{align}
and the two bounds converge to the same limit as $n\to\infty$. Unfortunately, the computational complexity of the lower bound (as well as the upper bound) above, is exponential in $n$.\footnote{Although, as shown by Birch~\cite{birch62}, the gap between the two bounds also decreases exponentially (but possibly with a small exponent) in $n$.} To that end, various works introduced different algorithms for approximating $\bar{H}(Y)$~\cite{pfisterthesis,jss04,lg09,ow11}, each algorithm exhibiting a different trade-off between approximation accuracy and complexity.

To obtain a better appreciation of the tightness of the bound from Theorem~\ref{thm:hmmlb} for finite values of $(\alpha,q)$, we numerically compare it to the output of one such approximation algorithm. In particular, we use~\cite[Algorithm 4.25]{ow11} to approximate $\bar{H}(Y)$, where the algorithm parameters are chosen to ensure high enough accuracy, and plot the results alongside with the lower bound of Theorem~\ref{thm:hmmlb}. We also plot the lower bound~\eqref{eq:mglmarkov} obtained by simply applying Mrs. Gerber's Lemma. The results for fixed $\alpha=0.11$ and varying $q$ are shown in Figures~\ref{fig:comp1}, and those for fixed $q=0.11$ and varying $\alpha$, in Figure~\ref{fig:comp2}.

\section{Nonsymmetric Markov Chains}
\label{sec:nonsym}

In this section we extend our lower bound from Theorem~\ref{thm:hmmlb} to the case where the input to the BSC is a nonsymmetric Markov process. Let
\begin{align}
\bP=\left[
      \begin{array}{cc}
        1-q_{01} & q_{01}  \\
        q_{10} & 1-q_{10} \\
      \end{array}
    \right]\nonumber
\end{align}
be a transition probability matrix, and $\bpi=[\pi_0 \ \pi_1]$ be a stationary distribution for $\bP$, such that $\bpi\bP=\bpi$. Let $\{X_n\}$ be a stationary first-order Markov process with transition probability matrix $\bP$, such that $X_1\sim\Ber(\pi_1)$ and for $n=2,3,\ldots$ holds $\Pr(X_n=j|X_{n-1}=i)=\bP_{ij}$. For $k=1,2,\ldots$, we define the quantities
\begin{align}
q^{\#k}_{ij}\triangleq\left(\bP^k\right)_{ij}=\Pr\left(X_n=j|X_{n-k}=i\right).\nonumber
\end{align}

The following is an extension of Proposition~\ref{prop:MarkovBEC} for nonsymmetric hidden Markov processes.

\begin{proposition}
Let $\{X_n\}$ be a stationary first-order Markov process with transition probability matrix $\bP$ and stationary distribution $\bpi$. Let $0< \lambda< 1$, and let $S$ be a random subset of $[n]$ generated by independently sampling each element
$i$ with probability $\lambda$. Then
\begin{align}
\lim_{n\to\infty}\frac{H(\bX_S|S)}{\lambda n}=\pi_0\mathbb{E}h\left(q_{01}^{\#G}\right)+\pi_1\mathbb{E}h\left(q_{10}^{\#G}\right),\label{eq:nonsyment}
\end{align}
where $G$ is a geometric random variable with parameter $\lambda$, i.e. $\Pr(G=g)=(1-\lambda)^{g-1}\lambda$ for $g=1,2,\ldots$.
\label{prop:nonsymMarkovBEC}
\end{proposition}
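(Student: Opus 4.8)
The plan is to mimic the proof of Proposition~\ref{prop:MarkovBEC} almost verbatim, since the only place where symmetry of the chain entered was the very last identity. Concretely, I would reintroduce the i.i.d.\ geometric random variables $G_i$ with parameter $\lambda$, the partial sums $A_k=\sum_{i=1}^k G_i$, and $K$ the largest index with $A_k\le n$; the random set $\{A_1,\ldots,A_K\}$ has the same law as $S$. Writing $H(\bX_S|S)$ as $\mathbb{E}\sum_{i=1}^K H(X_{A_i}\mid X_{A_{i-1}},\ldots,X_{A_1})$ and invoking the first-order Markov property, each term collapses to $H(X_{A_i}\mid X_{A_{i-1}})$, which by stationarity equals $H(X_{A_i-A_{i-1}+1}\mid X_1)=H(X_{G_i+1}\mid X_1)$; the $i=1$ term is $H(X_1)=h(\pi_1)$, a bounded constant that disappears after division by $n$. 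This reproduces the chain of equalities \eqref{eq:Markovity}--\eqref{eq:SumExpt}, none of which used symmetry.

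Next I would repeat the law-of-large-numbers step: conditioned on $G_i=g_i$, the event $\{i\le K\}$ is $\{\Bin(n-g_i,\lambda)\ge i-1\}$, whose probability tends to $1$ for $i\le(\lambda-\epsilon)n$ and to $0$ for $i\ge(\lambda+\epsilon)n$. Substituting this into \eqref{eq:SumExpt} and letting $n\to\infty$ gives, exactly as in \eqref{eq:Markovlimit},
\[
\lim_{n\to\infty}\frac{H(\bX_S|S)}{\lambda n}=\mathbb{E}\,H(X_{G+1}\mid X_1)=H(X_{G+1}\mid X_1,G),
\]
where $G$ is geometric with parameter $\lambda$.

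It then remains to evaluate $H(X_{G+1}\mid X_1,G)$ for the nonsymmetric chain. Conditioning on $X_1$, which is $\Ber(\pi_1)$ by stationarity, and on $G=g$: since $\Pr(X_{g+1}=1\mid X_1=0)=(\bP^g)_{01}=q^{\#g}_{01}$ and $\Pr(X_{g+1}=0\mid X_1=1)=(\bP^g)_{10}=q^{\#g}_{10}$, we obtain $H(X_{g+1}\mid X_1)=\pi_0 h(q^{\#g}_{01})+\pi_1 h(q^{\#g}_{10})$. Averaging over $G$ yields the claimed expression \eqref{eq:nonsyment}.

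Since the argument is structurally identical to that of Proposition~\ref{prop:MarkovBEC}, I do not expect a genuine obstacle; the only points that need a little care are (i) verifying that the $i=1$ term remains $\m{O}(1)$ and hence negligible after normalization — here it equals $h(\pi_1)$ rather than $1$ — and (ii) correctly bookkeeping the two possible starting states $X_1\in\{0,1\}$ with their stationary weights $\pi_0,\pi_1$, which is precisely what turns the single term $\mathbb{E}h(q^{*G})$ of the symmetric case into the convex combination $\pi_0\mathbb{E}h(q^{\#G}_{01})+\pi_1\mathbb{E}h(q^{\#G}_{10})$.
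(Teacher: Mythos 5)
Your proposal is correct and follows essentially the same route as the paper, which itself just says the argument is identical to that of Proposition~\ref{prop:MarkovBEC} up to~\eqref{eq:Markovlimit} and then substitutes $H(X_{G+1}|X_1,G)=\pi_0\mathbb{E}h\bigl(q_{01}^{\#G}\bigr)+\pi_1\mathbb{E}h\bigl(q_{10}^{\#G}\bigr)$. Your two points of care (the $i=1$ term becoming $h(\pi_1)$ and the weighting by $\pi_0,\pi_1$) are exactly the right details to check.
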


\begin{proof}
The proof is similar to that of Proposition~\ref{prop:MarkovBEC} up to equation~\eqref{eq:Markovlimit}, where now $H(X_{G+1}|X_1,G)=\pi_0\mathbb{E}h\left(q_{01}^{\#G}\right)+\pi_1\mathbb{E}h\left(q_{10}^{\#G}\right)$.
\end{proof}

Combining this with Proposition~\ref{prop:smgl2} and the continuity of the MGL function $\varphi(t)$ gives the following.
\begin{theorem}
The entropy rate of the process obtained by passing a stationary first-order Markov process with transition probability matrix $\bP$ and stationary distribution $\bpi$ through a BSC with crossover probability $\alpha$ satisfies
\begin{align}
\bar{H}(Y)\geq h\left(\alpha*h^{-1}\left(\pi_0\mathbb{E}h\left(q_{01}^{\#G}\right)+\pi_1\mathbb{E}h\left(q_{10}^{\#G}\right)\right)\right)\nonumber,
\end{align}
where $G$ is a geometric random variable with parameter $\lambda=(1-2\alpha)^2$.
\label{thm:hmmnonsymlb}
\end{theorem}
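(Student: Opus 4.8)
The plan is to obtain Theorem~\ref{thm:hmmnonsymlb} by exactly the same composition argument that produced Theorem~\ref{thm:hmmlb}, simply feeding the nonsymmetric entropy formula of Proposition~\ref{prop:nonsymMarkovBEC} into Proposition~\ref{prop:smgl2}. Concretely, fix $\alpha$, set $\lambda=(1-2\alpha)^2$, and for each $n$ let $S$ be the random subset of $[n]$ obtained by including each coordinate independently with probability $\lambda$. Proposition~\ref{prop:smgl2} gives $H(\bY)\geq n\,\varphi\!\left(\frac{H(\bX_S|S)}{\lambda n}\right)$ where $\varphi(t)=h(\alpha*h^{-1}(t))$, so dividing by $n$ and taking $n\to\infty$ yields
\begin{align}
\bar H(Y)=\lim_{n\to\infty}\frac{H(\bY)}{n}\geq \limsup_{n\to\infty}\varphi\!\left(\frac{H(\bX_S|S)}{\lambda n}\right).\nonumber
\end{align}
By Proposition~\ref{prop:nonsymMarkovBEC} the argument $\frac{H(\bX_S|S)}{\lambda n}$ converges to $\pi_0\,\mathbb{E}h(q_{01}^{\#G})+\pi_1\,\mathbb{E}h(q_{10}^{\#G})$, and since $\varphi$ is continuous on $[0,1]$ we may pass the limit inside, giving exactly the claimed inequality.

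The only points that need a word of care are that $\varphi$ is indeed well-defined and continuous on the relevant range, and that the limit of $H(\bY)/n$ exists so that it coincides with the entropy rate $\bar H(Y)$; the former is standard (the MGL function $\varphi$ is continuous and in fact convex, as already invoked in the excerpt), and the latter holds because $\{Y_n\}$ is a stationary process, being a deterministic-plus-i.i.d.-noise function of the stationary process $\{X_n\}$. One should also note that $H(\bX_S|S)\le n-H(\bX)+o(n)$ type considerations are not needed here since we are using the clean form in Proposition~\ref{prop:smgl2} rather than Theorem~\ref{thm:smgl} directly; the error term has already been absorbed.

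I do not anticipate a genuine obstacle: the theorem is a one-line corollary of the two propositions plus continuity, mirroring the derivation of Theorem~\ref{thm:hmmlb} verbatim. If anything, the step most worth stating explicitly is the interchange of limit and the continuous map $\varphi$, i.e. that $\lim_n \varphi(t_n)=\varphi(\lim_n t_n)$ when $t_n\to t$; this is immediate from continuity and requires no uniformity. Thus the proof reduces to: (i) invoke Proposition~\ref{prop:smgl2} with $\lambda=(1-2\alpha)^2$; (ii) normalize by $n$ and take $n\to\infty$, using stationarity of $\{Y_n\}$ to identify the limit with $\bar H(Y)$; (iii) substitute the limit from Proposition~\ref{prop:nonsymMarkovBEC}; (iv) conclude by continuity of $\varphi$.
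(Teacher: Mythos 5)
Your proposal is correct and matches the paper exactly: the paper also obtains Theorem~\ref{thm:hmmnonsymlb} by combining Proposition~\ref{prop:smgl2} with Proposition~\ref{prop:nonsymMarkovBEC} and invoking the continuity of the MGL function $\varphi$, just as in the derivation of Theorem~\ref{thm:hmmlb}. Your added remarks about stationarity of $\{Y_n\}$ and passing the limit through $\varphi$ are sound and only make explicit what the paper leaves implicit.
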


\subsection{Example: Processes Satisfying the $(1,\infty)$-RLL Constraint}
In this subsection we lower bound the entropy rate of a nonsymmetric first order Markov process, with $q_{01}=q$ and $q_{10}=1$, passed through a BSC with crossover probability $\alpha$. This underlying Markov process satisfies the so-called $(1,\infty)$-RLL constraint, where no consecutive ones are allowed to appear in a sequence. It is not difficult to verify that for this choice of $q_{01}$ and $q_{10}$ we have
\begin{align}
\pi_{0}&=\frac{1}{1+q}, \ \ \ \ \ \ \ \ \  \ \ \ \ \ \pi_{1}=\frac{q}{1+q},\nonumber\\
q^{\#k}_{01}&=\frac{q+(-q)^{k+1}}{1+q}, \ \ \ \ q^{\#k}_{10}=\frac{1-(-q)^{k}}{1+q}.\nonumber
\end{align}
In this case we have $\bar{H}(Y)\geq h(\alpha*h^{-1}(\beta))$, where
\begin{align}
\beta\triangleq\mathbb{E}\left(\frac{1}{1+q}h\left(\frac{1-(-q)^{G+1}}{1+q}\right)+\frac{q}{1+q}h\left(\frac{1-(-q)^{G}}{1+q}\right)\right).
\label{eq:betanonsym}
\end{align}
By the concavity of $h(\cdot)$, for any natural number $g$ hold
\begin{align}
&h\left(\frac{1-(-q)^{g+1}}{1+q}\right)=h\left(\frac{(1-q^g)+q^g(1-q(-1)^{g+1})}{1+q}\right)\nonumber\\
&\geq (1-q^g)h\left(\frac{1}{1+q}\right)+q^g h\left(\frac{1-q(-1)^{g+1}}{1+q}\right).\label{eq:hq01g}
\end{align}
and
\begin{align}
&h\left(\frac{1-(-q)^{g}}{1+q}\right)=h\left(\frac{(1-q^{g-1})+q^{g-1}(1-q(-1)^{g})}{1+q}\right)\nonumber\\
&\geq (1-q^{g-1})h\left(\frac{1}{1+q}\right)+q^{g-1} h\left(\frac{1-q(-1)^g}{1+q}\right).\label{eq:hq10g}
\end{align}
Substituting~\eqref{eq:hq01g} and~\eqref{eq:hq10g} into~\eqref{eq:betanonsym}, we obtain
\begin{align}
\beta\geq \left(1-2\frac{\mathbb{E}q^G}{1+q}\right)h\left(\frac{1}{1+q}\right)+\frac{\mathbb{E}q^G}{1+q}h\left(\frac{1-q}{1+q} \right).\label{eq:betanonsym2}
\end{align}
Now, using again the fact that $\mathbb{E}(q^G)=\tfrac{\lambda q}{1-(1-\lambda)q}$, and invoking Theorem~\ref{thm:hmmnonsymlb} gives the following result.
\begin{theorem}
The entropy rate of a nonsymmetric stationary binary first-order Markov process with transition probabilities $q_{01}=q$ and $q_{10}=1$, passed through a BSC with crossover probability $\alpha$, is lower bounded as $\bar{H}(Y)\geq h\left(\alpha*h^{-1}(\gamma)\right)$, where
\begin{align}
\gamma&\triangleq
h\left(\frac{1}{1+q}\right)\nonumber\\
&-\frac{(1-2\alpha)^2 q}{(1+q)(1-4\alpha(1-\alpha)q)}\left(2h\left(\frac{1}{1+q}\right)- h\left(\frac{1-q}{1+q} \right)\right).\nonumber
\end{align}
\label{thm:hmmRLL}
\end{theorem}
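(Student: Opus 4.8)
The plan is to assemble the pieces already laid out in the subsection, so the proof is essentially a bookkeeping exercise that chains together Theorem~\ref{thm:hmmnonsymlb}, the concavity bounds~\eqref{eq:hq01g}--\eqref{eq:hq10g}, and the moment generating function of the geometric random variable $G$. Concretely, I would start from the bound $\bar H(Y)\geq h(\alpha*h^{-1}(\beta))$ with $\beta$ as in~\eqref{eq:betanonsym}, which is itself an immediate specialization of Theorem~\ref{thm:hmmnonsymlb} to $q_{01}=q$, $q_{10}=1$ together with the explicit formulas for $\pi_0,\pi_1,q^{\#k}_{01},q^{\#k}_{10}$ recorded just above~\eqref{eq:betanonsym}.

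Next I would invoke the already-derived inequality~\eqref{eq:betanonsym2}, namely
\begin{align}
\beta\geq \left(1-2\frac{\mathbb{E}q^G}{1+q}\right)h\left(\frac{1}{1+q}\right)+\frac{\mathbb{E}q^G}{1+q}h\left(\frac{1-q}{1+q}\right),\nonumber
\end{align}
which follows by substituting the concavity bounds~\eqref{eq:hq01g} and~\eqref{eq:hq10g} into~\eqref{eq:betanonsym} and collecting the coefficient of $h(1/(1+q))$ and of $h((1-q)/(1+q))$; the point is that the sign-dependent terms $h((1-q(-1)^{g+1})/(1+q))$ and $h((1-q(-1)^g)/(1+q))$ together with the $g$-independent residual combine so that everything except $h(1/(1+q))$ is multiplied by $q^{g-1}$ or $q^g$, and in expectation these become $\mathbb{E}q^G$ up to the factor $q$. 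Then I would simply plug in $\mathbb{E}(q^G)=\tfrac{\lambda q}{1-(1-\lambda)q}$ with $\lambda=(1-2\alpha)^2$, note that $1-\lambda=4\alpha(1-\alpha)$, so that $\tfrac{\mathbb{E}q^G}{1+q}=\tfrac{(1-2\alpha)^2 q}{(1+q)(1-4\alpha(1-\alpha)q)}$, and rewrite the right-hand side of~\eqref{eq:betanonsym2} as
\begin{align}
h\left(\frac{1}{1+q}\right)-\frac{(1-2\alpha)^2 q}{(1+q)(1-4\alpha(1-\alpha)q)}\left(2h\left(\frac{1}{1+q}\right)-h\left(\frac{1-q}{1+q}\right)\right),\nonumber
\end{align}
which is precisely $\gamma$. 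Since $\beta\geq\gamma$ and $t\mapsto h(\alpha*h^{-1}(t))$ is monotonically nondecreasing on $[0,1]$ (the MGL function $\varphi$ is increasing because $h^{-1}$ is increasing on $[0,\tfrac12]$ and $h(\alpha*\cdot)$ is increasing there), we conclude $\bar H(Y)\geq h(\alpha*h^{-1}(\beta))\geq h(\alpha*h^{-1}(\gamma))$, which is the claimed bound.

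There is no serious obstacle here; the only thing requiring a little care is the algebraic regrouping that turns~\eqref{eq:betanonsym} into~\eqref{eq:betanonsym2}—one has to verify that after applying~\eqref{eq:hq01g} and~\eqref{eq:hq10g} the two ``$h(1/(1+q))$'' contributions add to $(1-q^g)\pi_0 + q^{g-1}\pi_1$ times $h(1/(1+q))$, i.e. that the coefficient collapses to $1-2\mathbb{E}q^G/(1+q)$ after taking expectation over $G$—and to confirm that the sign alternation $(-1)^{g}$ inside $h((1-q(-1)^{g+1})/(1+q))$ does not obstruct the bound. For the latter it suffices to observe that for every $g$ the arguments $(1-q(-1)^{g+1})/(1+q)$ and $(1-q(-1)^g)/(1+q)$ take only the two values $1/(1+q)$ — no wait, rather they take the values $(1-q)/(1+q)$ and $1$ depending on parity, and $h(1)=0$, so in fact one should lower-bound $h$ at those points by keeping the $(1-q)/(1+q)$ term and discarding (bounding below by $0$) the $h(1)=0$ term; this is exactly how~\eqref{eq:betanonsym2} is obtained, and it is the one place where the inequality (rather than equality) genuinely enters beyond the concavity step. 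Everything else is substitution of the closed-form stationary distribution, the $k$-step transition probabilities, and the geometric moment $\mathbb{E}q^G$, followed by monotonicity of $\varphi$.
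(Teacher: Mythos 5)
Your proposal is correct and follows essentially the same route as the paper: specialize Theorem~\ref{thm:hmmnonsymlb} to $q_{01}=q$, $q_{10}=1$, apply the concavity bounds~\eqref{eq:hq01g}--\eqref{eq:hq10g} to get~\eqref{eq:betanonsym2}, substitute $\mathbb{E}q^G=\tfrac{\lambda q}{1-(1-\lambda)q}$ with $1-\lambda=4\alpha(1-\alpha)$, and finish by monotonicity of $\varphi$. One small quibble: the parity step is not an additional inequality --- exactly one of the two sign-dependent arguments equals $1$ and $h(1)=0$ identically, so nothing is ``discarded''; the only genuine inequality beyond Theorem~\ref{thm:hmmnonsymlb} is the concavity step.
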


The following Corollary of Theorem~\ref{thm:hmmRLL}, shows that our bound becomes tight as $\alpha\to\tfrac{1}{2}$, and partially recovers the results of~\cite[Section 4.2]{Pfister11} and~\cite[Appendix E]{hm07}.

\begin{corollary}
For the very noisy regime, where $\alpha=\tfrac{1}{2}-\epsilon$ and $0\leq q< 1$, we have
\begin{align}
\bar{H}(Y)= 1-2\log(e)\left(\frac{1-q}{1+q}\right)^2\epsilon^2+\m{O}(\epsilon^4).\nonumber
\end{align}
\end{corollary}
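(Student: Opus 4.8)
\noindent The plan is to sandwich $\bar{H}(Y)$ between the lower bound already proved in Theorem~\ref{thm:hmmRLL} and the elementary single-letter upper bound $\bar{H}(Y)\le H(Y_1)$, and to verify that the two sides agree to order $\epsilon^2$.

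For the lower bound I would start from $\bar{H}(Y)\ge h\!\left(\alpha*h^{-1}(\gamma)\right)$ with $\gamma$ as in Theorem~\ref{thm:hmmRLL} and substitute $\alpha=\tfrac12-\epsilon$, so that $\lambda=(1-2\alpha)^2=4\epsilon^2$ and $4\alpha(1-\alpha)=1-4\epsilon^2$. The factor multiplying the bracket in the definition of $\gamma$ is then $\tfrac{4\epsilon^2 q}{(1+q)\,(1-q+4q\epsilon^2)}$, which is $\m{O}(\epsilon^2)$ for every fixed $q<1$; hence $\gamma=h\!\left(\tfrac{1}{1+q}\right)+\m{O}(\epsilon^2)$, and since $h^{-1}$ restricted to $[0,\tfrac12]$ is continuous, smooth at $h\!\left(\tfrac{1}{1+q}\right)$ for $0<q<1$, and satisfies $h^{-1}\!\left(h\!\left(\tfrac{1}{1+q}\right)\right)=\tfrac{q}{1+q}$ (with $\gamma\equiv0$ in the trivial case $q=0$), we also get $h^{-1}(\gamma)=\tfrac{q}{1+q}+\m{O}(\epsilon^2)$. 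The key step is the exact identity $\alpha*h^{-1}(\gamma)=\tfrac12-\epsilon\bigl(1-2h^{-1}(\gamma)\bigr)$; plugging $p=\epsilon\bigl(1-2h^{-1}(\gamma)\bigr)$ into the Taylor expansion~\eqref{eq:EntTay} gives $h\!\left(\alpha*h^{-1}(\gamma)\right)=1-2\log(e)\,\epsilon^2\bigl(1-2h^{-1}(\gamma)\bigr)^2+\m{O}(\epsilon^4)$. Finally $1-2h^{-1}(\gamma)=\tfrac{1-q}{1+q}+\m{O}(\epsilon^2)$, and since this quantity is multiplied by $\epsilon^2$ the perturbation is absorbed into $\m{O}(\epsilon^4)$, so that $\bar{H}(Y)\ge 1-2\log(e)\left(\tfrac{1-q}{1+q}\right)^2\epsilon^2+\m{O}(\epsilon^4)$.

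For the matching upper bound I would invoke stationarity of $\{Y_n\}$ together with ``conditioning does not increase entropy'' to write $\bar{H}(Y)=\lim_{n\to\infty}H(Y_n\mid Y_{n-1},\ldots,Y_1)\le H(Y_1)$. Because $X_1\sim\Ber\!\left(\tfrac{q}{1+q}\right)$ and $Z_1\sim\Ber(\alpha)$ are independent, $Y_1\sim\Ber\!\left(\tfrac{q}{1+q}*\alpha\right)$, and with $\alpha=\tfrac12-\epsilon$ one computes $\tfrac{q}{1+q}*\alpha=\tfrac12-\epsilon\,\tfrac{1-q}{1+q}$; applying~\eqref{eq:EntTay} once more yields $H(Y_1)=1-2\log(e)\left(\tfrac{1-q}{1+q}\right)^2\epsilon^2+\m{O}(\epsilon^4)$, which has the same leading behaviour as the lower bound. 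The stated equality then follows. (Alternatively, the matching upper bound can be read off from~\cite[Section~4.2]{Pfister11} or~\cite[Appendix~E]{hm07}.)

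The one genuinely delicate point is that, for the lower bound, one must expand $h\!\left(\alpha*h^{-1}(\gamma)\right)$ directly via~\eqref{eq:EntTay} and not through the linearization~\eqref{eq:osbound} used elsewhere in the paper: inequality~\eqref{eq:osbound} replaces $1-h(\cdot)$ by its leading quadratic term, and would therefore produce the coefficient $4\bigl(1-h\!\left(\tfrac{1}{1+q}\right)\bigr)$, which exceeds the target $2\log(e)\left(\tfrac{1-q}{1+q}\right)^2$ by precisely the $k\ge 2$ tail of the series in~\eqref{eq:EntTay} and is thus not tight at order $\epsilon^2$. Using the exact relation $\alpha*h^{-1}(\gamma)=\tfrac12-\epsilon(1-2h^{-1}(\gamma))$ makes the composition $h\circ h^{-1}$ collapse exactly, and the only remaining bookkeeping --- that the $\m{O}(\epsilon^2)$ error in $\gamma$, hence in $h^{-1}(\gamma)$, influences the final estimate only at order $\epsilon^4$ after multiplication by the $\epsilon^2$ prefactor --- is routine.
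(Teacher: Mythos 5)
Your proposal is correct and follows essentially the same route as the paper: the single-letter upper bound $\bar{H}(Y)\le H(Y_1)=h(\alpha*\pi_1)$ expanded via~\eqref{eq:EntTay}, and the lower bound obtained from Theorem~\ref{thm:hmmRLL} by showing $h^{-1}(\gamma)=\pi_1+\m{O}(\epsilon^2)$ and expanding $h(\alpha*\cdot)$ exactly rather than through~\eqref{eq:osbound}. The only cosmetic difference is that you control the perturbation of $h^{-1}$ by its smoothness at $h(\pi_1)$ (treating $q=0$ separately), whereas the paper gets the same one-sided estimate from the concavity of $h$ via $h^{-1}(h(\pi_1)-c\epsilon^2)\ge\pi_1-\tfrac{c}{h'(\pi_1)}\epsilon^2$.
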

\begin{proof}
Clearly, $\bar{H}(Y)\leq H(Y_n)=h(\alpha*\pi_1)$. From~\cite[equation (11)]{os15it} combined with~\eqref{eq:EntTay}  we have
\begin{align}
h(\alpha*\pi_1)&=1-\sum_{k=1}^{\infty}\frac{\log(e)}{2k(2k-1)}\left(2\epsilon(1-2\pi_1)\right)^{2k}\label{eq:convTay}\\
&=1-2\log(e)(1-2\pi_1)^2\epsilon^2+\m{O}(\epsilon^4),\nonumber
\end{align}
which establishes our upper bound. For the lower bound, note that $\gamma\geq h(\pi_1)-c\epsilon^2$ for some universal constant $c>0$. From the concavity of $h(\cdot)$ we have that for all $0\leq x\leq 1$ holds $h(x)\leq h(\pi_1)+h'(\pi_1)(x-\pi_1)$. Thus, using the monotonicity of $h^{-1}(\cdot)$ and the fact that $h'(\pi_1)>0$ for all $0\leq q<1$, we have
\begin{align}
h^{-1}(\gamma)\geq h^{-1}(h(\pi_1)-c\epsilon^2)\geq \pi_1-\frac{c}{h'(\pi_1)}\epsilon^2.\label{eq:invgamma}
\end{align}
Now, by Theorem~\ref{thm:hmmRLL},~\eqref{eq:invgamma} and~\eqref{eq:convTay}, we have
\begin{align}
\bar{H}(Y)&\geq h(\alpha*h^{-1}(\gamma))\geq h\left(\alpha*\left(\pi_{1}-\frac{c\epsilon^2}{h'(\pi_1)}\right)\right)\nonumber\\
&=1-2\log(e)(1-2\pi_1)^2\epsilon^2+\m{O}(\epsilon^4).\nonumber
\end{align}
\end{proof}

\section*{Acknowledgment}
The author is grateful to Alex Samorodnitsky for many valuable discussions and observations, and to Yury Polyanskiy and Yihong Wu for sharing an early draft of~\cite{pw16}.

\bibliographystyle{IEEEtran}
\bibliography{Dissertation_bib}

\end{document}